\def\dOi{13(3:25)2017}
\newcommand{\comment}[1]{}
\def\ie{{\em i.e.}}
\newcommand{\Cc}{\mathcal{C}}
\newcommand{\cMs}{\mathcal{M}^*}
\def \tuple#1{\langle #1 \rangle}
\newcommand{\Ra}{\Rightarrow}
\newcommand{\ra}{\rightarrow}
\newcommand{\leqv}{\leq^v}
\DeclareMathOperator{\lb}{lb}
\DeclareMathOperator{\SL}{SL}
\DeclareMathOperator{\Ord}{Ord}
\newcommand{\triangleRel}{\mathrel{\overset\triangle\Longleftrightarrow}}
\newcommand{\ud}{\triangleq}
\newcommand{\da}{\ensuremath{\downarrow\!}}
\newtheorem{theorem}{Theorem}[section]
\newtheorem{lemma}[theorem]{Lemma}
\newtheorem{corollary}[theorem]{Corollary}
\newtheorem{example}[theorem]{Example}
\begin{document}

\title[A new characterization of complete Heyting and co-Heyting algebras]{A new characterization of complete Heyting and\\ co-Heyting algebras}

\address{Dipartimento di Matematica, University of Padova, Italy}
\email{francesco.ranzato@unipd.it}
\author[F.~Ranzato]{Francesco Ranzato}

\keywords{Complete Heyting algebra, Veinott ordering}
\subjclass{F.3.0 [Theory of computation]: Logics and meanings of programs---General}

\begin{abstract}
We give a new order-theoretic 
characterization of a complete Heyting and co-Hey\-ting algebra $C$. This result provides an unexpected 
relationship with the field of Nash equilibria, being
 based on the so-called Veinott
ordering relation on subcomplete sublattices of $C$, which is crucially 
used in Topkis' theorem for studying the order-theoretic stucture of 
Nash equilibria of supermodular games. 
\end{abstract}

\maketitle

\section*{Introduction}\label{intro}
Complete Heyting algebras~---~also called frames, while locales is used
for complete co-Heyting algebras~---~play a fundamental role as algebraic model of intuitionistic logic 
and in pointless topology \cite{johnstone1982,johnstone1983}. 
To the best of our knowledge, no characterization of complete Heyting and co-Heyting algebras has been known. 
As reported in \cite{bd}, a sufficient condition has been given in \cite{fun59} while a necessary condition
has been given by \cite{ch62}. 

We give here an order-theoretic 
characterization of complete Heyting and co-Hey\-ting algebras that puts forward an 
unexected relationship with Nash equilibria. Topkis' theorem \cite{topkis98} is well known in the theory of supermodular
games in mathematical
economics.  This result  shows that the set of solutions of a supermodular game, \ie, its set of pure-strategy Nash equilibria, 
is nonempty and contains a greatest element and a least one \cite{topkis78}. Topkis' theorem has been strengthned by 
\cite{zhou94}, where it is proved that this set of Nash equilibria is indeed a complete lattice. 
These results rely on so-called Veinott's ordering relation (also called strong set relation). 
Let $\tuple{C,\leq,\wedge,\vee}$ be a complete lattice. Then,
the relation $\leqv \subseteq \wp(C)\times \wp(C)$ on subsets of $C$, according
to Topkis~\cite{topkis78}, has been introduced by Veinott~\cite{topkis98,vei89}: for any $S,T\in \wp(C)$, 
$$S\leqv T \; \triangleRel\; \forall s\in S.\forall t\in T.\: s\wedge t\in S \;\:\&\;\: s\vee t \in T.$$
This relation $\leqv$ is always transitive and antisymmetric, while reflexivity
$S\leqv S$ holds if and only if $S$ is a sublattice of $C$. 
If $\SL(C)$ denotes the set of nonempty subcomplete sublattices of $C$ then $\tuple{\SL(C),\leqv}$ is therefore a poset. 
The proof of Topkis' theorem is then based on the fixed points of a certain 
mapping defined on the poset $\tuple{\SL(C),\leqv}$. 

To the best of our knowledge, no result is available on the order-theoretic properties of the Veinott 
poset $\tuple{\SL(C),\leqv}$. When is this poset a lattice? And a complete lattice? Our efforts in 
investigating these questions led to the following main result: the Veinott poset
$\SL(C)$ is a complete lattice if and only if $C$ is a complete Heyting and co-Heyting algebra. 
This finding therefore reveals an unexpected 
link between complete Heyting algebras and Nash equilibria of supermodular games. 
This characterization of the Veinott relation $\leqv$ could be exploited for generalizing a recent approach 
based on abstract interpretation for approximating the Nash equilibria of supermodular games introduced by~\cite{ran16}.

\section{Notation}
If $\tuple{P,\leq}$ is a poset and 
$S\subseteq P$ then $\lb(S)$ denotes the set of lower bounds of $S$, \ie, 
$\lb(S)\ud \{ x\in P~|~ \forall s\in S.\: x\leq s\}$, while 
if $x\in P$ then $\da x \ud \{y\in P~|~ y\leq x\}$. 

\noindent
Let $\tuple{C,\leq,\wedge,\vee}$ be a complete lattice. 
A nonempty subset $S\subseteq C$ is a subcomplete sublattice of $C$ if for all its nonempty subsets $X\subseteq S$, $\wedge X\in S$ 
and $\vee X \in S$, while $S$ is merely a sublattice of $C$ if this holds for all its  nonempty and finite subsets $X\subseteq S$ only. 
If $S \subseteq C$ then the nonempty Moore closure of $S$ is defined as
$\cMs (S) \ud \{\wedge X \in C~|~ X\subseteq S, X\neq \varnothing \}$. 
Let us observe that $\cMs$ is an upper closure operator on the poset $\tuple{\wp(C),\subseteq}$, meaning that: 
(1)~$S\subseteq T\: \Ra\: \cMs(S) \subseteq \cMs(T)$; (2)~$S\subseteq \cMs(S)$; (3)~$\cMs(\cMs(S)) = \cMs(S)$. 

\noindent
We define 
$$\SL(C) \triangleq \{S\subseteq C ~|~ S \neq \varnothing,\, S \text{~subcomplete sublattice of~} C\}.$$
Thus, if $\leqv$ denotes the Veinott ordering defined in Section~\ref{intro} then 
$\tuple{\SL(C),\leqv}$ is a poset.

\noindent
$C$ is a complete Heyting algebra (also called frame) 
if for any $x\in C$ and $Y\subseteq C$, $x\wedge (\bigvee Y) = \bigvee_{y\in Y} x\wedge y$,
while it is a complete co-Heyting algebra (also called locale) if the dual equation 
$x\vee (\bigwedge Y) = \bigwedge_{y\in Y} x\vee y$ holds. Let us recall that these two 
notions are orthogonal, for example the complete lattice of open subsets of $\mathbb{R}$ ordered by $\subseteq$ is
a complete Heyting algebra, but not a complete co-Heyting algebra. 
$C$ is (finitely) distributive if for any $x,y,z\in C$, $x\wedge(y\vee z) = (x\wedge y) \vee (x\wedge z)$. 
Let us also recall that 
$C$ is completely distributive if for any family $\{x_{j,k}~|~j\in J,k\in K(j)\}\subseteq C$,  we have that 
$$\bigwedge_{j\in J} \bigvee_{k\in K(j)} x_{j,k} = \bigvee_{f\in J\leadsto K} \bigwedge_{j\in J} x_{j,f(j)}$$
where $J$ and, 
for any $j\in J$, $K(j)$ are sets
of indices and
$J\leadsto K \ud \{f: J\ra \cup_{j\in J} K(j)~|~ \forall j\in J.\: f(j)\in K(j)\}$ denotes the set of choice functions. 
It turns out that the class of completely distributive complete lattices is strictly contained in the class
of complete Heyting and co-Heyting algebras. Clearly, any completely distribuitive lattice is a 
complete Heyting and co-Heyting algebra. On the other hand, this containment turns out to be strict, as
shown by the following counterexample.  

\begin{example}\rm
Let us recall that a subset $S\subseteq [0,1]$ of real numbers 
is a regular open set if $S$ is open and $S$ coincides with the interior
of the closure of $S$. For example, $(1/3,2/3)$ and $(0,1/3) \cup (2/3,1)$ are both regular open sets, while 
$(1/3,2/3) \cup (2/3,1)$ is open but not regular. 
 Let us 
consider $\mathcal{C}=\tuple{\{ S\subseteq [0,1]~|~ S ~\text{is a regular open set}\},\subseteq}$. It is known that 
$\mathcal{C}$ is a complete Boolean algebra (see e.g.\ \cite[Theorem~12, Section~2.5]{vla02}). 
As a consequence, $\mathcal{C}$ is a 
complete Heyting and co-Heyting algebra  (see e.g.\ \cite[Theorem~3, Section~0.2.3]{vla02}). 

\noindent
Recall that an element $a\in C$ in a complete lattice $C$ is an atom if $a$ is different from the least
element $\bot_C$ of $C$ and for any $x\in C$, if $\bot_C < x \leq a$ then $x=a$, while $C$ is atomic if 
for any $x\in C\smallsetminus\{\bot_C\}$ there exists an atom $a\in C$ such that $a\leq x$. 
It turns out that $\Cc$ does not have atoms: in fact, any regular open set $S\in \Cc$ is a union 
of open sets, namely, $S=\cup \{ U\subseteq [0,1]~|~
U ~\text{is open},\: U\subseteq S\}$ (see e.g.\ \cite[Section~2.5]{vla02}), 
so that no $S\in \Cc\smallsetminus \{\varnothing\}$ can be an atom of $\mathcal{C}$.
In turn, this implies that $\Cc$ is a complete Boolean algebra which is not atomic.  
It known that a complete Boolean algebra is completely distributive if and only if it is atomic
(see \cite[Theorem 14.5, Chapter 5]{hand}). Hence, since $\Cc$ is not atomic, we obtain that 
$\mathcal{C}$ is not completely distributive. 
\qed
\end{example}

\section{The Sufficient Condition}
To the best of our knowledge, no result is available on the order-theoretic properties of the Veinott 
poset $\tuple{\SL(C),\leqv}$. 
The following example shows that, in general, 
$\tuple{\SL(C),\leqv}$ is not a lattice. 

\begin{example}\label{exa1}\rm 
Consider the nondistributive pentagon lattice $N_5$, where, to use a compact notation,  
subsets of $N_5$ are denoted by strings of letters. 
\begin{center}
    \begin{tikzpicture}[scale=0.45]
      \draw (0,4) node[name=e] {{$e$}};
      \draw (-2,2.75) node[name=d] {{$d$}};
      \draw (2,2) node[name=b] {{$b$}};
      \draw (-2,1.25) node[name=c] {{$c$}};
      \draw (0,0) node[name=a] {{$a$}};

      \draw[semithick] (a) -- (b);
      \draw[semithick] (a) -- (c);
      \draw[semithick] (c) -- (d);
      \draw[semithick] (b) -- (e);
      \draw[semithick] (d) -- (e);

\end{tikzpicture}
\end{center}
Consider $ed,abce \in \SL(N_5)$. It turns out that 
$\da ed =\{a,c,d,ab,ac,ad,cd,ed,acd,ade,cde,abde,$ $acde,abcde\}$ and 
$\da abce =\{a,ab,ac,abce\}$. Thus, $\{a,ab,ac\}$ is the set of common lower bounds of $ed$ and $abce$. 
However, the set $\{a,ab,ac\}$ does not include a greatest
element, since  $a\leqv ab$ and $a\leqv ac$ while $ab$ and $ac$ are incomparable. 
Hence, $ab$ and $c$ are maximal lower bounds of $ed$ and $abce$, so that $\tuple{\SL(N_5),\leqv}$ is not
a lattice. \qed
\end{example}

Indeed, the following result shows that if $\SL(C)$ turns out to be a lattice then $C$ must necessarily be distributive. 

\begin{lemma}\label{finite-lemma}
If $\tuple{\SL(C),\leqv}$ is a lattice then $C$ is distributive. 
\end{lemma}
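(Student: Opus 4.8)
The plan is to argue the contrapositive: assuming $C$ is not distributive, I would exhibit two subcomplete sublattices with no greatest lower bound in $\tuple{\SL(C),\leqv}$, so that $\SL(C)$ is not a lattice. By the classical Birkhoff--Dedekind characterization, a non-distributive lattice $C$ contains a sublattice isomorphic to either $N_5$ or $M_3$. Since every finite sublattice of $C$ is trivially subcomplete, all the finite sublattices I use belong to $\SL(C)$, and all the relevant meets and joins can be computed inside the copy of $N_5$ (resp.\ $M_3$). Each case is then handled by producing a pair $S,T \in \SL(C)$ whose common lower bounds admit two incomparable maximal elements, hence no greatest one.

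For the $N_5$ case I would reuse the pair of Example~\ref{exa1}: writing $a<c<d<e$ for the chain and $b$ for the skew element, take $S=\{d,e\}$ and $T=\{a,b,c,e\}$; a finite computation on the $N_5$ operations shows that $\{a,b\}$ and $\{a,c\}$ are $\leqv$-incomparable common lower bounds. For the $M_3$ case, writing $0<p,q,r<1$ with the usual relations, I would take the analogous pair $S=\{p,1\}$ and $T=\{0,q,r,1\}$, with $\{0,q\}$ and $\{0,r\}$ playing the role of the two incomparable maximal lower bounds, again verified by a finite check in $M_3$.

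The crux, and the step I expect to be the main obstacle, is that $N_5$ and $M_3$ sit inside a possibly much larger $C$, so a priori the ambient elements could furnish additional common lower bounds or even a greatest one, which the finite computation does not by itself rule out. To handle this I would reason directly about a hypothetical meet $G$: since $G$ must be $\leqv$-above every common lower bound, it suffices to show that no common lower bound $R$ satisfies both $\{a,b\}\leqv R$ and $\{a,c\}\leqv R$ (resp.\ both $\{0,q\}\leqv R$ and $\{0,r\}\leqv R$). This reduction is what tames the ambient lattice.

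The argument for the reduced claim is uniform across the two cases. First, the meet clause of $\leqv$ at the bottom element forces $a\le \rho$ (resp.\ $0\le\rho$) for every $\rho\in R$: taking $s=a$ in $\{a,b\}\leqv R$ gives $a\wedge\rho\in\{a,b\}$, and since $a\wedge\rho\le a$ while $b\not\le a$ this meet equals $a$. Combined with the join clause of $R\leqv T$ at the bottom ($t=a$, resp.\ $t=0$), where $\rho\vee a=\rho$, this yields $R\subseteq T$, so $R$ is finite and lives inside the copy of $N_5$ (resp.\ $M_3$). Next, the join clauses of $\{a,b\}\leqv R$ and $\{a,c\}\leqv R$ (resp.\ of $\{0,q\}\leqv R$ and $\{0,r\}\leqv R$) force the top $e$ (resp.\ $1$) into $R$ as soon as $R$ contains a ``middle'' element, after which the meet clause of $R\leqv S$ drags in an element ($d$, resp.\ $p$) lying outside $T$, a contradiction; the only surviving possibility $R=\{a\}$ (resp.\ $R=\{0\}$) is excluded since it fails $\{a,b\}\leqv R$ (resp.\ $\{0,q\}\leqv R$). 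Hence no such $R$ exists, the two maximal lower bounds have no common $\leqv$-upper bound among the lower bounds, so $S$ and $T$ have no meet, completing the contrapositive.
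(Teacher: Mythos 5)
Your proof is correct, and its skeleton is the paper's: the contrapositive, the $N_5$/$M_3$ dichotomy, and (in the pentagon case) exactly the same pair $\{d,e\}$, $\{a,b,c,e\}$ with lower bounds $\{a,b\}$, $\{a,c\}$. Your key reduction --- that a hypothetical meet would itself be a common lower bound lying $\leqv$-above both $\{a,b\}$ and $\{a,c\}$, so it suffices to refute that configuration --- is also the device the paper uses to tame the ambient lattice $C$. The differences are in execution, and they are genuine if modest. In the $M_3$ case the paper picks the two chains $\{e,b\}$ and $\{e,c\}$, whose incomparable lower bounds are the larger sets $\{a,b,c,e\}$ and $\{a,b,c,d,e\}$, and its verification is correspondingly different; you instead pick $\{p,1\}$ and $\{0,q,r,1\}$, the exact analogue of your pentagon pair, so that both cases run on one uniform argument (the bottom of the copy forces $R\subseteq T$, any middle element of $R$ forces the top of the copy into $R$, and then $R\leqv S$ drags an element lying outside $T$ into $R$). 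Moreover, the final step has a different logical shape: the paper proves full determination --- any common lower bound $X$ with $\{a,b\}\leqv X$ must equal $\{a,b\}$, and similarly for $\{a,c\}$, whence an existing meet would force $\{a,b\}=\{a,c\}$ --- while you derive a contradiction directly without pinning $R$ down. The paper's route yields slightly more structural information (it identifies the maximal lower bounds exactly and shows nothing sits $\leqv$-above them); yours is a little shorter, symmetric in the two designated lower bounds, and the unification of the two cases is a small but real economy. Both arguments are complete; in particular, all the checks you leave as ``finite computations'' (that the chosen sets are common lower bounds, and the $N_5$/$M_3$ arithmetic) do hold.
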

\begin{proof}
By the basic characterization of distributive lattices, we know that $C$ is not distributive iff either the pentagon $N_5$ is a sublattice
of $C$ or
the diamond $M_3$ is a sublattice of $C$. We consider separately these two possibilities. 

\medskip
\noindent
$(N_5)$ Assume that $N_5$, as depicted by the diagram in Example~\ref{exa1}, is a sublattice of $C$. Following
Example~\ref{exa1}, we consider the sublattices $ed,abce\in  \tuple{\SL(C),\leqv}$ and we prove that their meet 
does not exist. 
By Example~\ref{exa1}, $ab,ac \in \lb(\{ed,abce\})$. Consider any $X\in \SL(C)$ such that $X\in \lb(\{ed,abce\})$. Assume that
$ab\leqv X$. If $x\in X$ then, by $ab\leqv X$, we have that 
$b\vee x \in X$. Moreover, by $X\leqv abce$, $b\vee x \in \{a,b,c,e\}$. If $b\vee x = e$ then we would have that 
$e\in X$,  and in turn, by $X\leqv ed$, $d=e\wedge d \in X$, so that, by $X\leqv abce$, we would get the contradiction 
$d=d\vee c \in \{a,b,c,e\}$. 
Also, if $b\vee x =c$ then we would have that 
$c\in X$,  and in turn, by $ab \leqv X$, $e=b\wedge c \in X$, so that, as in the previous case, we would get the contradiction 
$d=d\vee c \in \{a,b,c,e\}$. 
Thus, we necessarily have that $b\vee x\in \{a,b\}$. On the one hand, 
if $b\vee x =b$ then $x\leq b$ so that, by $ab \leqv X$, $x=b\wedge x\in \{a,b\}$. 
On the other hand, if $b\vee x = a$ then $x\leq a$ so that, by $ab \leqv X$, $x=a\wedge x\in \{a,b\}$. 
Hence, $X\subseteq \{a,b\}$. Since $X\neq \varnothing$, suppose that $a\in X$. Then, by $ab \leqv X$, 
$b=b\vee a\in X$. If, instead, $b\in X$ then, by $X\leqv abce$, $a=b\wedge a\in X$. We have therefore shown that $X=ab$.
An analogous argument shows that if $ac \leqv X$ then $X=ac$. If the meet of $ed$ and $abce$ would exist, call it $Z\in \SL(C)$, 
from $Z\in \lb(\{ed,abce\})$ and $ab,ac \leqv Z$ we would get the contradiction $ab=Z=ac$.  

\medskip
\noindent
$(M_3)$ Assume that the diamond $M_3$, as depicted by the following diagram, is a sublattice of $C$. 
\begin{center}
    \begin{tikzpicture}[scale=0.35]
      \draw (0,0) node[name=a] {{$a$}};
     \draw (-2,2) node[name=b] {{$b$}};
       \draw (0,2) node[name=c] {{$c$}};
       \draw (2,2) node[name=d] {{$d$}};  
      \draw (0,4) node[name=e] {{$e$}};

      \draw[semithick] (a) -- (b);
      \draw[semithick] (a) -- (c);
      \draw[semithick] (a) -- (d);
      \draw[semithick] (e) -- (b);
      \draw[semithick] (e) -- (c);
      \draw[semithick] (e) -- (d);

\end{tikzpicture}
\end{center}
In this case, we consider the sublattices $eb,ec\in  \tuple{\SL(C),\leqv}$ and we prove that their meet 
does not exist. 
It turns out that
 $abce,abcde \in \lb(\{eb,ec\})$ while $abce$ and $abcde$ are incomparable. 
 Consider any $X\in \SL(C)$ such that $X\in \lb(\{eb,ec\})$. Assume that
$abcde\leqv X$. If $x\in X$ then, by $X \leqv eb,ec$, we have that 
$x\wedge b, x\wedge c \in X$, so that $x\wedge b \wedge c = x\wedge a\in X$. 
{}From $abcde \leqv X$, we obtain that for any $y\in \{a,b,c,d,e\}$, $y=y\vee (x\wedge a) \in X$. Hence, $\{a,b,c,d,e\}
\subseteq X$. From $X\leqv eb$, we derive that $x\vee b\in \{e,b\}$, and, from $abcde \leqv X$, 
we also have that $x\vee b\in X$. If $x\vee b=e$ then $x\leq e$, so that, from $abcde\leqv X$, we obtain
$x=e\wedge x\in \{a,b,c,d,e\}$. If, instead, $x\vee b = b$ then $x\leq b$,  
so that, from $abcde\leqv X$, we derive
$x=b\wedge x\in \{a,b,c,d,e\}$. In both cases, we have that $X\subseteq \{a,b,c,d,e\}$. We thus conclude that $X=abcde$. 
An analogous argument shows that if $abce \leqv X$ then $X=abce$. Hence, similarly to the previous case $(N_5)$, 
the meet of $eb$ and $ec$ does not exist. 
\end{proof}

Moreover, we show that if we require $\SL(C)$ to be a complete lattice then the complete lattice $C$ must be 
a complete Heyting and co-Heyting algebra. 
Let us remark that this proof makes use of the axiom of choice. 
\begin{theorem}\label{main-th}
If $\tuple{\SL(C),\leqv}$ is a complete lattice then $C$ is a
complete Heyting and co-Heyting algebra. 
\end{theorem}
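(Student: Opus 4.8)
The plan is to prove the complete Heyting law $x\wedge\bigvee Y=\bigvee_{y\in Y}(x\wedge y)$ and to obtain the complete co-Heyting law for free by duality. Indeed, a direct check shows that the Veinott relation is order-reversed by lattice duality: $S\leqv T$ holds in $C^{op}$ iff $T\leqv S$ holds in $C$, and a subset is a nonempty subcomplete sublattice of $C$ iff it is one of $C^{op}$ (closure under all nonempty $\bigvee$ and $\bigwedge$ is self-dual). Hence $\tuple{\SL(C^{op}),\leqv}=\tuple{\SL(C),\leqv}^{op}$, which is a complete lattice exactly when $\tuple{\SL(C),\leqv}$ is. Since $C$ is complete co-Heyting iff $C^{op}$ is complete Heyting, it suffices to establish the Heyting law from the hypothesis, as a statement universally quantified over $C$, and then instantiate it on $C^{op}$.

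Since $\bigvee_{y\in Y}(x\wedge y)\le x\wedge\bigvee Y$ holds in every complete lattice, I would argue the reverse inequality by contraposition, exhibiting a pair of subcomplete sublattices with no meet in $\SL(C)$ --- contradicting already that $\tuple{\SL(C),\leqv}$ is a lattice, a fortiori a complete one. By Lemma~\ref{finite-lemma} I may assume throughout that $C$ is (finitely) distributive, since otherwise $\SL(C)$ is not a lattice and there is nothing to prove; this distributivity is what will tame the configuration. So suppose the Heyting law fails: there are $x\in C$ and $Y\subseteq C$ with $b\ud\bigvee_{y\in Y}(x\wedge y)$ strictly below $m\ud x\wedge\bigvee Y$. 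Writing $a\ud\bigvee Y$, one has $b<m\le x$, $b<m\le a$, and $x\wedge y\le b$ for every $y\in Y$; moreover $b<m$ forces $a\not\le x$ and hence $x<x\vee a$, and also $y_0<x\vee a$ for any $y_0\in Y$.

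The construction generalizes the diamond gadget of the $(M_3)$ case of Lemma~\ref{finite-lemma}, where $eb$ and $ec$ had no meet: here I would take the two-element sublattices $P\ud\{x,\,x\vee a\}$ and $Q\ud\{y_0,\,x\vee a\}$ for a suitably chosen $y_0\in Y$, and aim to produce two \emph{incomparable} maximal lower bounds of $\{P,Q\}$, built as nonempty Moore closures $\cMs(\cdot)$ of sets assembled from $x$, the elements $y$ and $x\wedge y$ for $y\in Y$, and $x\vee a$. The core is the collapse step, modelled on the Lemma: starting from an arbitrary $X\in\lb(\{P,Q\})$ lying above one of these two lower bounds, one feeds each $s\in X$ through the Veinott conditions $X\leqv P$ and $X\leqv Q$ --- so that $s\vee x\in\{x,x\vee a\}$ and $s\vee y_0\in\{y_0,x\vee a\}$, while $s\wedge x,\,s\wedge y_0\in X$ --- and uses these together with the strict gap $b<m$ to trap every element of $X$, forcing $X$ to coincide with that lower bound. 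As the two lower bounds are incomparable, a common meet would have to equal both, a contradiction.

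The step I expect to be genuinely hard is exactly this collapse in the \emph{infinitary} setting, where the finite reasoning of Lemma~\ref{finite-lemma} does not transfer verbatim: one must exploit subcompleteness --- closure of members of $\SL(C)$ under arbitrary nonempty $\bigvee$ and $\bigwedge$, not merely binary ones --- to recover the global data $a$, $b$, $m$ from the pointwise Veinott conditions, and to check that the two candidate Moore closures are themselves subcomplete sublattices that are maximal and incomparable. This is presumably where the axiom of choice enters, as flagged in the remark preceding the statement: for infinite $Y$ one selects, simultaneously over all $y\in Y$ (or all $s\in X$), the witnesses supplied by the membership conditions $s\wedge t\in S$ and $s\vee t\in T$, in order to assemble the sublattices and run the collapse. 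Pinning down the precise $y_0$ and the exact two lower bounds so that both the lower-bound and the maximality/collapse verifications succeed is the main obstacle; the remaining checks (that $P,Q$ are genuine distinct two-element sublattices, and the two trivial inequalities) are routine.
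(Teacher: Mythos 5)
Your duality reduction and the appeal to Lemma~\ref{finite-lemma} to assume $C$ distributive are both sound and match the paper. The genuine gap is the central claim: that a failure of the \emph{infinitary} Heyting law yields a \emph{pair} of subcomplete sublattices with no meet, ``contradicting already that $\tuple{\SL(C),\leqv}$ is a lattice.'' This is precisely what breaks, and your specific gadget is refuted by the lattice $C$ of the paper's final example: $C$ is distributive, has a descending chain $\{a_i\}_{i\geq 0}$ with $\bigwedge_i a_i=\bot$, and an element $b$ with $b\vee a_i=\top$ for all $i$, so $b\vee\bigwedge_i a_i=b<\top=\bigwedge_i(b\vee a_i)$. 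Apply your construction to $C^{op}$, where the Heyting law fails with $x=b$ and $Y=\{a_i\}_{i\geq 0}$: the join of $Y$ computed in $C^{op}$ is $a=\bot_C$, hence the join of $x$ and $a$ in $C^{op}$ is also $\bot_C$, so your two sublattices are $P=\{b,\bot_C\}$ and $Q=\{a_{i_0},\bot_C\}$, and by your own duality observation their meet in $\SL(C^{op})$ is the join of $P$ and $Q$ in $\tuple{\SL(C),\leqv}$. That join \emph{exists}. Indeed, if $P\leqv T$ then every $t\in T$ satisfies $b\wedge t\in\{b,\bot\}$, which in this lattice forces $t\in\{\top,b,\bot\}$ (since $b\wedge a_j=b\wedge b_j=b_j$); if $Q\leqv T$ then similarly $t\in\{\top,a_0,\dots,a_{i_0},\bot\}$; hence $T\subseteq\{\top,\bot\}$, and $\bot\in T$ is impossible because $P\leqv T$ would then require $b=b\vee\bot\in T$. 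So $T=\{\top\}$ is the \emph{unique} common upper bound of $P$ and $Q$ in $\tuple{\SL(C),\leqv}$, hence their join. Equivalently, in $\SL(C^{op})$ your $P$ and $Q$ have exactly one common lower bound (the singleton consisting of the bottom of $C^{op}$), which is therefore their meet: there are no two incomparable maximal lower bounds, nothing for the collapse argument to collapse to, and no choice of $y_0$ helps.

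The underlying obstruction is that non-distributivity is a finitary defect, detectable by the finite gadgets of Lemma~\ref{finite-lemma}, whereas failure of the complete Heyting/co-Heyting law in a distributive complete lattice is intrinsically infinitary and (as the example shows) need not be visible to two-element sublattices. Accordingly, the paper never proves that $\tuple{\SL(C),\leqv}$ fails to be a \emph{lattice} in this situation --- only that it fails to be a \emph{complete} lattice, which is all the theorem needs --- and its proof exhibits an \emph{infinite} family with no glb: distributivity plus failure of the co-Heyting law gives failure of join-continuity (Gierz et al.), Bruns' lemma converts this into a transfinite descending chain $\{a_\beta\}_{\beta<\alpha}$ and an element $b$ with $\bigwedge_\beta a_\beta\leq b<\bigwedge_\beta(b\vee a_\beta)$, and one then builds sublattices $X_\beta\in\SL(C)$ by transfinite induction together with two lower bounds $A=\cMs(\bigcup_{\beta<\alpha} X_\beta)$ and $B=\cMs(A\cup\{b\})$ of the whole family, such that any glb would be forced to equal $A$ while $B\leqv A$ fails (because $b\vee\bigwedge_\beta a_\beta=b\notin A$). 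Note also that the axiom of choice enters there via Zorn's lemma, to normalize the chain so that $b\not\leq a_\beta$ for every $\beta$, not in the place you anticipated. To repair your proposal you would have to abandon the binary gadget and work with an infinite family of this kind; the collapse step you flagged as ``genuinely hard'' is, as stated, not hard but impossible.
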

\begin{proof}
Assume that the complete lattice $C$ is not a complete co-Heyting algebra. 
If $C$ is not distributive, then, by Lemma~\ref{finite-lemma},
${\tuple{\SL(C),\leqv}}$ is not a complete lattice. 
Thus, let us assume that $C$ is distributive. 
The (dual) characterization in \cite[Remark 4.3, p.~40]{gierz} states that a complete lattice $C$ is 
a complete co-Heyting algebra
iff $C$ is distributive and join-continuous (\ie, the join distributes over arbitrary meets of directed subsets). 
Consequently, it turns out that $C$ is not join-continuous. 
Thus, by the result in \cite{bruns67} on directed sets and chains (see also \cite[Exercise~4.9, p.~42]{gierz}), 
there exists an infinite descending chain $\{a_\beta\}_{\beta<\alpha}\subseteq C$, for some ordinal $\alpha\in \Ord$, such that 
if $\beta<\gamma<\alpha$ then $a_\beta > a_\gamma$, and an element $b\in C$ such that 
$\bigwedge_{\beta< \alpha} a_\beta \leq b  < \bigwedge_{\beta< \alpha} (b\vee a_\beta)$. 
We observe the following facts:
\begin{enumerate}[label=(\bf\Alph*)]
\item[(A)]  
$\alpha$ must necessarily be a limit ordinal (so that 
$|\alpha|\geq |\mathbb{N}|$), otherwise if $\alpha$ is a successor ordinal then 
we would have that, for any $\beta < \alpha$, 
$a_{\alpha -1}\leq a_\beta$, so that $\bigwedge_{\beta< \alpha} a_\beta = a_{\alpha-1} \leq  b$, and in turn
 we would obtain 
$\bigwedge_{\beta< \alpha} (b\vee a_\beta)= b\vee a_{\alpha-1}=b$, \ie,  a contradiction.
\item[(B)]
 We have that 
$\bigwedge_{\beta< \alpha} a_\beta < b$, 
otherwise $\bigwedge_{\beta< \alpha} a_\beta = b$ would imply that $b\leq a_\beta$ for any $\beta < \alpha$, so that 
$\bigwedge_{\beta< \alpha} (b\vee a_\beta)=\bigwedge_{\beta< \alpha} a_\beta =b$, which is a contradiction. 
\item[(C)]

Firstly, observe that $\{b \vee a_\beta\}_{\beta< \alpha}$ is an infinite descending chain in $C$. 
Let us consider a limit ordinal $\gamma < \alpha$. 
Without loss of generality, we assume that the glb's of the subchains
$\{a_\rho\}_{\rho < \gamma}$ and $\{b \vee a_\rho\}_{\rho < \gamma}$ belong, respectively, 
to the chains $\{a_\beta\}_{\beta<\alpha}$ and  $\{b \vee a_\beta\}_{\beta< \alpha}$. 
For our purposes, this is not a restriction because
the elements 
$\bigwedge_{\rho < \gamma} a_\rho$ and $\bigwedge_{\rho<\gamma} (b \vee a_\rho)$ can be added to the respective chains
$\{a_\beta\}_{\beta<\alpha}$ and $\{b \vee a_\beta\}_{\beta< \alpha}$ and these extensions 
would preserve both the glb's of the chains $\{a_\beta\}_{\beta<\alpha}$ and $\{b \vee a_\beta\}_{\beta< \alpha}$ and
the inequalities $\bigwedge_{\beta< \alpha} a_\beta < b  < \bigwedge_{\beta< \alpha} (b\vee a_\beta)$.
Hence, by this nonrestrictive assumption, we have that for any limit ordinal $\gamma < \alpha$, 
$\bigwedge_{\rho < \gamma} a_\rho = a_\gamma$ and 
$\bigwedge_{\rho<\gamma} (b \vee a_\rho) = b\vee a_\gamma$ hold. 

\item[(D)] 
Let us consider the set  $S=\{a_\beta~|~ \beta <\alpha,\: 
\forall \gamma \geq \beta.\: b \not\leq a_\gamma\}$. 
Then, $S$ must be nonempty, otherwise we would have that for any $\beta<\alpha$ there exists some $\gamma_\beta\geq \beta$ such that 
$b\leq a_{\gamma_\beta} \leq a_\beta$, and this would imply that for any $\beta <\alpha$, 
$b\vee a_\beta=a_\beta$, so that we would obtain 
$\bigwedge_{\beta<\alpha} (b \vee a_\beta) =
\bigwedge_{\beta<\alpha} a_\beta$, which is a contradiction. Since any chain  in (\ie, subset of) 
$S$ has an upper bound in $S$, by Zorn's Lemma, $S$ contains the maximal element $a_{\bar{\beta}}$, for some $\bar{\beta}<\alpha$, such
that for any $\gamma< \alpha$ and $\gamma \geq \bar{\beta}$, $b \not\leq a_\gamma$. We also observe that $\bigwedge_{\beta < \alpha} a_\beta
=\bigwedge_{\bar{\beta}\leq \gamma < \alpha} a_\gamma$ and  $\bigwedge_{\beta < \alpha} (b \vee a_\beta)
=\bigwedge_{\bar{\beta}\leq \gamma < \alpha} (b \vee a_\gamma)$. Hence, without loss of generality, we assume that
the chain $\{a_\beta\}_{\beta<\alpha}$ is such that, for any $\beta<\alpha$, 
$b \not\leq a_\beta$ holds. 
\end{enumerate}
 
\noindent For any ordinal $\beta < \alpha$~---~therefore, we remark that 
the limit ordinal $\alpha$ is not included~---~we define, by transfinite induction, the following subsets 
$X_\beta \subseteq C$:
\begin{itemize}
\item[--] $\beta=0$ $\;\Ra\;$ $X_\beta \ud \{a_0,\: b\vee a_0\}$;
\item[--] $\beta>0$ $\;\Ra\;$ $ X_\beta \ud \bigcup_{\gamma<\beta}X_\gamma \cup \{b \vee a_{\beta}\}\cup 
\{(b \vee a_{\beta}) \wedge a_\delta~|~ \delta\leq \beta\}$.
\end{itemize}
Observe that, for any $\beta >0$,  $(b \vee a_{\beta}) \wedge a_\beta = a_\beta$ and that the set
$\{b \vee a_{\beta}\}\cup 
\{(b \vee a_{\beta}) \wedge a_\delta~|~ \delta\leq \beta\}$ is indeed a chain. Moreover, 
if $\delta\leq \beta$ then, 
 by distributivity, we have that $(b \vee a_{\beta}) \wedge a_\delta = 
(b \wedge a_{\delta}) \vee (a_{\beta} \wedge a_\delta) =(b\wedge a_\delta)\vee a_{\beta}$. Moreover,
if $\gamma < \beta < \alpha$ then $X_\gamma \subseteq X_\beta$. 
 
We show, by transfinite induction on $\beta$, that  for any $\beta< \alpha$, $X_\beta \in \SL(C)$. 
Let $\delta\leq \beta$ and $\mu\leq \gamma <\beta$. We notice the following facts:
\begin{enumerate}[widest*=10]
\item $(b\vee a_\beta)\wedge (b\vee a_\gamma) = b\vee a_\beta \in X_\beta$
\item $(b\vee a_\beta)\vee (b\vee a_\gamma) = b\vee a_\gamma \in X_\gamma \subseteq X_\beta$
\item $(b \vee a_{\beta}) \wedge \big((b \vee a_{\gamma}) \wedge a_\mu\big) =  
(b \vee a_{\beta}) \wedge a_\mu \in X_{\beta}$
\item 
$(b \vee a_{\beta}) \vee \big((b \vee a_{\gamma}) \wedge a_\mu\big) =
(b \vee a_{\beta}) \vee (b \wedge a_{\mu}) \vee a_\gamma =  b \vee a_{\gamma} \in  X_{\gamma}\subseteq X_\beta$

\item $\big((b \vee a_{\beta})\wedge a_\delta\big) \wedge \big((b \vee a_{\gamma}) \wedge a_\mu\big) =  
(b \vee a_{\beta}) \wedge a_{\max(\delta,\mu)} \in X_{\beta}$

\item $\big((b \vee a_{\beta})\wedge a_\delta\big)  
\vee \big((b \vee a_{\gamma}) \wedge a_\mu\big) =
\big((b \wedge a_{\delta})\vee a_\beta\big)  \vee \big((b \wedge a_{\mu}) \vee a_\gamma\big) = 
 (b \wedge a_{\min(\delta,\mu)}) \vee a_{\gamma} = (b \vee a_{\gamma} ) 
 \wedge a_{\min(\delta,\mu)}
 \in  X_{\gamma}\subseteq X_\beta$

\item if $\beta$ is a limit ordinal then, by point~(C) above, $ \bigwedge_{\rho < \beta} (b\vee a_\rho) = b\vee a_\beta$
holds; therefore, 
$\bigwedge_{\rho < \beta} \big((b\vee a_\rho) \wedge a_\delta \big)= \big( \bigwedge_{\rho < \beta} (b\vee a_\rho)\big) \wedge 
a_\delta = (b\vee a_\beta)\wedge a_\delta\in X_\beta$; in turn, by taking the glb of these latter elements in $X_\beta$,
we have that
$\bigwedge_{\delta \leq \beta} \big((b\vee a_\beta) \wedge a_\delta\big)
= (b\vee a_\beta) \wedge \big(\bigwedge_{\delta \leq \beta} a_\delta\big) =(b\vee a_\beta) \wedge a_\beta= 
a_\beta \in X_\beta$ 

\end{enumerate}
Since $X_0 \in \SL(C)$ obviously holds, the points (1)-(7) above show, by transfinite induction, 
that for any $\beta <\alpha$, $X_\beta$ is closed under arbritrary lub's and glb's of nonempty subsets, 
\ie, $X_\beta\in \SL(C)$. 
In the following, we prove that the glb of $\{X_\beta\}_{\beta < \alpha}\subseteq \SL(C)$ in $\tuple{\SL(C),\leqv}$ does not exist. 

Recalling, by point~(A) above, that $\alpha$ is a limit ordinal, we define
$A\ud \cMs(\bigcup_{\beta<\alpha} X_\beta)$. By point~(C) above, we observe that for any limit ordinal 
$\gamma<\alpha$, the $\bigcup_{\beta<\alpha} X_\beta$ already contains the glb's 
$$\bigwedge_{\rho<\gamma} (b\vee a_\rho) = b \vee a_\gamma \in X_\gamma,\qquad 
\bigwedge_{\rho<\gamma} a_\rho = a_\gamma \in X_\gamma,$$
$$\{ \big(\bigwedge_{\rho<\gamma} (b\vee a_\rho)\big) \wedge a_\delta ~|~ \delta <\gamma\}
=\{ (b\vee a_\gamma) \wedge a_\delta ~|~ \delta <\gamma\}\subseteq X_\gamma.$$
Hence, by taking the glb's of all the chains in $\bigcup_{\beta<\alpha} X_\beta$, $A$ turns out to be
as follows:
$$A = \bigcup_{\beta<\alpha} X_\beta \cup \{\bigwedge_{\beta<\alpha} (b\vee a_\beta),\: 
\bigwedge_{\beta<\alpha} a_\beta \}\cup
\{ \big(\bigwedge_{\beta<\alpha} (b\vee a_\beta)\big) \wedge a_\delta ~|~ \delta <\alpha\}.$$
Let us show that $A\in \SL(C)$. 
First, we observe that $\bigcup_{\beta<\alpha} X_\beta$ is closed under arbitrary nonempty lub's. In fact, if $S\subseteq  
\bigcup_{\beta<\alpha} X_\beta$ then $S = \bigcup_{\beta<\alpha} (S\cap X_\beta)$, so that
$$ \bigvee S = \bigvee \bigcup_{\beta<\alpha} (S\cap X_\beta)
= \bigvee_{\beta<\alpha} \bigvee S\cap X_\beta.$$
Also, if $\gamma <\beta <\alpha$ 
then $S\cap X_\gamma \subseteq S\cap X_\beta$
and, in turn, $\bigvee S\cap X_\gamma \leq 
\bigvee S\cap X_\beta$,  so that $\{\bigvee S\cap X_\beta\}_{\beta< \alpha}$ is an increasing chain.
Hence, since  $\bigcup_{\beta<\alpha}X_\beta$ 
does not contain infinite increasing chains, there exists some $\gamma < \alpha$ such that $\bigvee_{\beta<\alpha} \bigvee S\cap X_\beta
= \bigvee S\cap X_\gamma \in X_\gamma$, and consequently $\bigvee S \in \bigcup_{\beta<\alpha}X_\beta$. 
Moreover, $\{\big(\bigwedge_{\beta<\alpha} (b\vee a_\beta)\big) \wedge a_\delta\}_{\delta <\alpha}\subseteq 
A$ is a chain 
whose lub is $\big(\bigwedge_{\beta<\alpha} (b\vee a_\beta)\big) \wedge a_0$ which belongs to the chain itself, 
while
its glb is 
$$\bigwedge_{\delta< \alpha} \big(\bigwedge_{\beta<\alpha} (b\vee a_\beta)\big) \wedge a_\delta = 
\big(\bigwedge_{\beta<\alpha} (b\vee a_\beta)\big) \wedge \bigwedge_{\delta< \alpha} a_\delta = \bigwedge_{\delta< \alpha} a_\delta\in A.$$
Finally, if $\delta \leq \gamma < \alpha$ then we have that:
\begin{enumerate}[resume,widest*=10]
\item $\big(\bigwedge_{\beta<\alpha} (b\vee a_\beta)\big)\wedge (b\vee a_\gamma) = \bigwedge_{\beta<\alpha} (b\vee a_\beta) \in A$
\item $\big(\bigwedge_{\beta<\alpha} (b\vee a_\beta)\big)\vee (b\vee a_\gamma) = b\vee a_\gamma \in X_\gamma \subseteq A$
\item $\big(\bigwedge_{\beta<\alpha} (b\vee a_\beta)\big) \wedge \big((b \vee a_{\gamma}) \wedge a_\delta\big) =  
\big(\bigwedge_{\beta<\alpha} (b\vee a_\beta)\big) \wedge a_\delta \in A$
\item We have that
$\big(\bigwedge_{\beta<\alpha} (b\vee a_\beta)\big) \vee \big((b \vee a_{\gamma}) \wedge a_\delta\big) =
\big(\bigwedge_{\beta<\alpha} (b\vee a_\beta)\big) \vee (b \wedge a_{\delta}) \vee a_\gamma
=  \big(\bigwedge_{\beta<\alpha} (b\vee a_\beta)\big) \vee a_\gamma$. Moreover, 
$
b\vee a_\gamma \leq \big(\bigwedge_{\beta<\alpha} (b\vee a_\beta)\big) \vee a_\gamma \leq (b \vee a_\gamma) \vee a_\gamma
= b\vee a_\gamma
$; hence, $\big(\bigwedge_{\beta<\alpha} (b\vee a_\beta)\big) \vee \big((b \vee a_{\gamma}) \wedge a_\delta\big)
= b\vee a_\gamma \in  X_{\gamma}\subseteq A$.
\end{enumerate}
Summing up, we have therefore shown that $A\in \SL(C)$. 

We now prove that $A$ is a lower bound of $\{X_\beta\}_{\beta < \alpha}$, \ie, we prove, by transfinite induction 
on $\beta$, that for any $\beta <\alpha$,
$A\leqv X_\beta$. 
\begin{itemize}
\item $\big(A\leqv X_0\big)$:  this is a consequence of the following easy equalities, for any $\delta\leq \beta<\alpha$: 
$(b \vee a_\beta)\wedge a_0 \in X_\beta \subseteq A$; $(b \vee a_\beta)\vee a_0 = b \vee a_0 \in X_0$; 
$(b \vee a_\beta)\wedge (b \vee a_0) =b\vee a_\beta \in X_\beta \subseteq A$; $(b \vee a_\beta)\vee (b \vee a_0) = b \vee a_0 \in X_0$; 
$\big((b \vee a_\beta)\wedge a_\delta\big) \wedge a_0 = (b \vee a_\beta)\wedge a_\delta \in X_\beta \subseteq A$; 
$\big((b \vee a_\beta)\wedge a_\delta\big) \vee a_0 =  a_0  \in X_0$; 
$\big((b \vee a_\beta)\wedge a_\delta\big) \wedge (b \vee a_0) = (b \vee a_\beta)\wedge a_\delta \in X_\beta \subseteq A$; 
$\big((b \vee a_\beta)\wedge a_\delta\big) \vee (b \vee a_0) =  b \vee a_0  \in X_0$.

\item $\big(A\leqv X_\beta$, $\beta>0\big)$: Let $a\in A$ and $x\in X_\beta$. If $x\in \bigcup_{\gamma<\beta} X_\gamma$
then
$x\in X_\gamma$ for some $\gamma<\beta$, so that, since by inductive hypothesis $A\leqv X_\gamma$, 
we have that $a\wedge x\in A$ and $a\vee x\in X_\gamma\subseteq X_\beta$. Thus, assume that $x\in X_\beta \smallsetminus \big(\bigcup_{\gamma<\beta} X_\gamma\big)$. If $a\in X_\beta$ then $a\wedge x \in X_\beta\subseteq A$ and $a \vee x\in X_\beta$. If $a\in X_\mu$, for some $\mu > \beta$, 
then $a\wedge x \in X_\mu \subseteq A$, while points~(2), (4) and (6) above show that $a \vee x \in X_\beta$. 
If $a=\bigwedge_{\beta<\alpha} (b\vee a_\beta)$ then points~(8)-(11) above show that $a\wedge x \in A$ and $a\vee x \in X_\beta$. 
If $a= \big(\bigwedge_{\gamma <\alpha} (b\vee a_\gamma)\big) \wedge a_\mu$, for some $\mu <\alpha$, and $\delta\leq \beta$ then we have that: 
\begin{enumerate}[resume,widest*=10]
\item $\big(\big(\bigwedge_{\gamma<\alpha} (b\vee a_\gamma)\big) \wedge a_\mu\big) \wedge (b\vee a_\beta) = 
\big(\bigwedge_{\gamma<\alpha} (b\vee a_\gamma)\big) \wedge a_\mu   \in A$

\item $\big(\big(\bigwedge_{\gamma<\alpha} (b\vee a_\gamma)\big) \wedge a_\mu\big) \vee (b\vee a_\beta) =
\big(\big(\bigwedge_{\gamma<\alpha} (b\vee a_\gamma)\big) \vee (b\vee a_\beta)\big) \wedge (a_\mu \vee (b \vee a_\beta))=
(b\vee a_\beta) \wedge (b \vee a_{\min(\mu,\beta)}) = b \vee a_\beta
\in X_\beta$

\item $\big(\big(\bigwedge_{\gamma<\alpha} (b\vee a_\gamma)\big) \wedge a_\mu\big)  \wedge \big((b \vee a_{\beta}) \wedge a_\delta\big) =  
\big(\bigwedge_{\gamma<\alpha} (b\vee a_\gamma)\big) \wedge a_{\max(\mu,\delta)} \in A$

\item 
\begin{align*}
\big(\big(\bigwedge_{\gamma<\alpha} (b\vee a_\gamma)\big) \wedge a_\mu\big) \vee \big((b \vee a_{\beta}) \wedge a_\delta\big) & =\\
\big(\big(\bigwedge_{\gamma<\alpha} (b\vee a_\gamma)\big) \vee (b \vee a_{\beta})\big) \wedge 
\big(\big(\bigwedge_{\gamma<\alpha} (b\vee a_\gamma)\big) \vee a_\delta\big) \wedge \big(a_\mu \vee (b\vee a_\beta)\big) \wedge (a_\mu \vee a_\delta)&=\\
(b \vee a_\beta) \wedge 
(b \vee a_\delta) \wedge 
\big(b\vee a_{\min(\mu,\beta)}\big) \wedge a_{\min(\mu,\delta)}&=\\
(b \vee a_\beta) \wedge a_{\min(\mu,\delta)}&\in X_\beta \end{align*}
\end{enumerate}
Finally, if $a = \bigwedge_{\gamma < \alpha} a_\gamma$ and $x\in X_\beta$ then $a\leq x$ so that $a \wedge x = a \in A$ and 
$a \vee x = x \in X_\beta$. Summing up, we have shown that $A\leqv X_\beta$. 
\end{itemize}

\noindent Let us now prove that $b\not\in A$. Let us first observe that for any $\beta < \alpha$, we have that $a_\beta\not\leq b$: in fact, 
if $a_\gamma \leq b$, for some $\gamma <\alpha$ then, for any $\delta\leq \gamma$, $b\vee a_\delta = b$, so that
we would obtain $\bigwedge_{\beta <\alpha} (b\vee a_\beta) = b$, which is a contradiction. Hence, for any $\beta < \alpha$ and 
$\delta \leq \beta$, it turns out that
$b\neq b\vee a_\beta$ and $b \neq (b \wedge a_\delta) \vee a_\beta = (b \vee a_\beta)\wedge a_\delta$. Moreover, by point~(B) above,
$b \neq \bigwedge_{\beta <\alpha} (b\vee a_\beta)$, while, by hypothesis, 
$b \neq \bigwedge_{\beta< \alpha} a_\beta$. Finally, for any $\delta <\alpha$, if
$b = \big(\bigwedge_{\beta <\alpha} (b\vee a_\beta)\big)\wedge a_\delta$ then we would
derive that $b\leq a_\delta$, which, by point~(D) above, is a contradiction. 

Now, we define $B \ud \cMs(A\cup \{b\})$, so that
$$B= A \cup \{b\}\cup \{b\wedge a_\delta~|~\delta< \alpha\}.$$
Observe that for any $a\in A$, with $a\neq \bigwedge_{\beta<\alpha} a_\beta$, and for any $\delta<\alpha$, we have that
$b\wedge a_\delta \leq a$, while 
$b \vee \Big(\big(\bigwedge_{\beta <\alpha} (b\vee a_\beta)\big)\wedge a_\delta\Big)
= \Big(b \vee \big(\bigwedge_{\beta <\alpha} (b\vee a_\beta)\big)\Big) \wedge (b \vee a_\delta) = 
\big(\bigwedge_{\beta <\alpha} (b\vee a_\beta)\big) \wedge (b \vee a_\delta) = \bigwedge_{\beta <\alpha} (b\vee a_\beta)
\in B$. Also, for any $\delta \leq \beta < \alpha$, we have that 
$b \vee \big((b \vee a_\beta) \wedge a_\delta\big) =
\big(b \vee (b\vee a_\beta)\big) \wedge (b \vee a_\delta) = b \vee a_\delta \in B$. 
Also, 
$b \vee \big(\bigwedge_{\beta <\alpha} (b\vee a_\beta)\big) = \bigwedge_{\beta <\alpha} (b\vee a_\beta)\in B$
and  $b \vee \bigwedge_{\beta<\alpha}a_\beta 
=b\in B$. We have thus checked that $B$ is closed under lub's (of arbitrary nonempty subsets), 
\ie, $B\in \SL(C)$. Let us check that $B$ is a lower bound of $\{X_\beta\}_{\beta <\alpha}$. Since we have already shown that 
$A$ is a lower bound, and since $b\wedge a_\delta \leq b$, for any $\delta <\alpha$, it is enough to observe that for 
any $\beta<\alpha$ and $x\in X_\beta$, $b\wedge x \in B$ and $b\vee x \in X_\beta$. 
The only nontrivial case is for $x= (b\vee a_\beta)\wedge a_\delta$, for some $\delta\leq \beta<\alpha$. 
On the one hand, $b \wedge \big((b\vee a_\beta)\wedge a_\delta \big)= b\wedge a_\delta\in B$, on the other hand,
$b \vee \big((b\vee a_\beta)\wedge a_\delta \big)=b \vee \big((b\wedge a_\delta)\vee a_\beta \big)=
b \vee a_\beta \in X_\beta$.

Let us now assume that there exists $Y\in \SL(C)$ such that $Y$ is the glb of $\{X_\beta\}_{\beta <\alpha}$ in $\tuple{\SL(C),\leqv}$. Therefore, 
since we proved that 
$A$ is a lower bound, we have that $A\leqv Y$. Let us consider $y\in Y$. Since $b\vee a_0\in A$, we have that $b\vee a_0 \vee y\in Y$. 
Since $Y\leqv X_0 = \{a_0, b\vee a_0\}$, we have that $b \vee a_0 \vee y \vee a_0 = b\vee a_0 \vee y \in \{a_0,b\vee a_0\}$. 
If $b\vee a_0 \vee y = a_0$ then $b\leq a_0$, which, by point~(D), is a contradiction. Thus, we have that 
$b\vee a_0\vee y = b \vee a_0$, so that $y\leq b \vee a_0$
and $b \vee a_0 \in Y$. We know that if $x\in X_\beta$, for some $\beta < \alpha$, then $x \leq b\vee a_0$, so that, from
$Y\leqv X_\beta$, we obtain that $(b\vee a_0)\wedge x = x \in Y$, that is, $X_\beta \subseteq Y$. Thus, we have that
$\bigcup_{\beta <\alpha} X_\beta \subseteq Y$, 
and, in turn, by subset monotonicity of $\cMs$, we get $A=\cMs(\bigcup_{\beta <\alpha} X_\beta) \subseteq \cMs(Y)=Y$. Moreover, 
from $y\leq b \vee a_0$, since $A\leqv Y$ and $b\vee a_0 \in A$, we obtain $(b\vee a_0)\wedge y = y \in A$, that is $Y\subseteq A$. 
We have therefore shown that $Y=A$. Since we proved that 
$B$ is a lower bound, $B\leqv Y=A$
must hold. However, it turns out that $B \leqv A$ is a contradiction: by considering
$b\in B$ and $\bigwedge_{\beta< \alpha} a_\beta\in A$, we would have that
$b \vee \big(\bigwedge_{\beta< \alpha} a_\beta\big) = b \in A$, while we have shown above that $b\not\in A$. 
We have therefore shown that the glb of $\{X_\beta\}_{\beta <\alpha}$ in $\tuple{\SL(C),\leqv}$ does not exist. 

To close the proof, it is enough to observe that if $\tuple{C,\leq}$ is not a complete Heyting
algebra then, by duality, $\tuple{\SL(C),\leqv}$ does not have
lub's. 
\end{proof}

\section{The Necessary Condition}

It turns out that the property of being a complete lattice for the poset $\tuple{\SL(C),\leqv}$  is a 
necessary condition for a complete Heyting and co-Heyting algebra $C$. 

\begin{theorem}
If $C$ is a complete Heyting and co-Heyting algebra then
$\tuple{\SL(C),\leqv}$ is a complete lattice.  
\end{theorem}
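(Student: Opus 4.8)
The plan is to prove that $\tuple{\SL(C),\leqv}$ is a complete lattice by exhibiting arbitrary greatest lower bounds. I would first record that $\{\bot_C\}$ and $\{\top_C\}$ are, respectively, the least and the greatest element of $\tuple{\SL(C),\leqv}$: indeed $\{\bot_C\}\leqv S$ and $S\leqv\{\top_C\}$ hold for every $S\in\SL(C)$ directly from the definition of $\leqv$. Since a poset with a greatest element in which every nonempty family admits a greatest lower bound is automatically a complete lattice, it suffices to construct the meet of an arbitrary nonempty family $\{S_i\}_{i\in I}\subseteq\SL(C)$. Moreover, I would exploit order-duality to halve the work: the subcomplete sublattices of $C$ and of its order-dual $C^{op}$ coincide, and $\leqv$ computed in $C^{op}$ is exactly the converse of $\leqv$ computed in $C$, so that $\tuple{\SL(C^{op}),\leqv}$ is the order-dual of $\tuple{\SL(C),\leqv}$. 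As $C^{op}$ is again a complete Heyting and co-Heyting algebra, establishing the existence of all meets for an arbitrary such $C$ automatically yields the existence of all joins as well.

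Next I would propose an explicit candidate for $\bigwedge_{i\in I}S_i$. Writing $\bot_{S}\ud\bigwedge S$ and $\top_{S}\ud\bigvee S$ for the endpoints of $S\in\SL(C)$, a short computation shows that $S\leqv T$ forces $\bot_S\leq\bot_T$ and $\top_S\leq\top_T$; hence the meet must have endpoints $\bigwedge_{i}\bot_{S_i}$ and $\bigwedge_{i}\top_{S_i}$, both of which are genuine meets. This suggests building the candidate $M$ by closing a suitably structured generating set extracted from the $S_i$ under the nonempty Moore operator $\cMs$ and, dually, under arbitrary nonempty joins, so that $M$ becomes a subcomplete sublattice with $\bigwedge M=\bigwedge_i\bot_{S_i}$ and $\bigvee M=\bigwedge_i\top_{S_i}$. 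The first substantial step is to verify that $M\in\SL(C)$, i.e.\ that $M$ is closed under arbitrary nonempty meets and joins. Since $C$ need not be completely distributive, the generators must be shaped so that only the two \emph{one-variable} infinite distributive laws intervene, never a full interchange of $\bigwedge$ and $\bigvee$: the frame law $x\wedge\bigvee Y=\bigvee_{y\in Y}x\wedge y$ and its co-Heyting dual $x\vee\bigwedge Y=\bigwedge_{y\in Y}x\vee y$, applied repeatedly as in computations (1)--(7) of the proof of Theorem~\ref{main-th}. It is precisely here that the hypothesis that $C$ is simultaneously a frame and a locale is consumed.

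It then remains to show, first, that $M$ is a lower bound, i.e.\ $M\leqv S_i$ for every $i\in I$, and, second, that $M$ is the greatest one. The lower-bound verification is a case analysis over the generators of $M$, each case reducing to an application of one of the two distributive laws, in the spirit of the bullet points of Theorem~\ref{main-th}. The genuine obstacle is maximality: given an arbitrary common lower bound $L$ with $L\leqv S_i$ for all $i$, I must prove $L\leqv M$, namely that $\ell\wedge m\in L$ and $\ell\vee m\in M$ for all $\ell\in L$ and $m\in M$. The difficulty is that $m$ is in general an infinite meet of joins of selections drawn from the $S_i$, so verifying $\ell\vee m\in M$ amounts to commuting $\ell\vee(\cdot)$ past these infinitary operations via the co-Heyting law and then recognising the outcome as another generator of $M$; this is the most delicate part of the argument. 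Once maximality is secured, $\bigwedge_{i\in I}S_i=M$ exists for every family, and, together with the top element $\{\top_C\}$ and the dual construction of joins obtained by applying the meet construction to $C^{op}$, this shows that $\tuple{\SL(C),\leqv}$ is a complete lattice.
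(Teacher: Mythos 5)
Your reduction steps are sound: $\{\bot_C\}$ and $\{\top_C\}$ are indeed the least and greatest elements of $\tuple{\SL(C),\leqv}$, a poset with a greatest element and glb's of all nonempty families is a complete lattice, and the self-duality observation is correct (the paper uses the same reduction, minus the duality remark, which becomes redundant once $\{\top_C\}$ is available). The gap is that the object the whole proof is supposed to be about~---~the candidate meet $M$~---~is never defined, and the two guiding principles you offer for constructing it are misleading. First, closing a generating set taken from the $S_i$ under $\cMs$ and under nonempty joins cannot by itself be the answer: the subcomplete sublattice generated by $\bigcup_{i} S_i$ is in general not even a lower bound. Already in the two-element chain $C=\{0<1\}$ with $S_1=\{0\}$ and $S_2=\{1\}$, this closure is $\{0,1\}$, and $\{0,1\}\not\leqv\{0\}$ because $1\vee 0=1\notin\{0\}$; the actual glb is $\{0\}$. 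Second, your claim that the meet ``must have endpoints $\bigwedge_{i}\bigwedge S_i$ and $\bigwedge_{i}\bigvee S_i$'' is false: $S\leqv T$ does give $\bigwedge S\leq\bigwedge T$ and $\bigvee S\leq\bigvee T$, but for a lower bound $M$ this yields only the inequality $\bigvee M\leq\bigwedge_{i}\bigvee S_i$, which can be strict. In the three-element chain $0<m<1$ with $S_1=\{0,1\}$ and $S_2=\{m\}$, every common lower bound is contained in $\{0\}$ (from $X\leqv S_1$ with $t=0$ one gets $X\subseteq\{0,1\}$, and from $X\leqv S_2$ one gets $X\subseteq\:\da m$), so the glb is $\{0\}$, whose top is $0$, strictly below $\bigwedge_{i}\bigvee S_i=m$. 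Any construction engineered to force $\bigvee M=\bigwedge_{i}\bigvee S_i$ is therefore doomed in general.

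What is missing is precisely the paper's key idea: a selection principle that cuts the Moore closure down to those elements whose principal ideals behave well with respect to every member of the family. The paper defines $G\ud\{x\in\cMs(\bigcup_{i} S_i)~|~\forall k\in I.\;\cMs(\bigcup_{i} S_i)\:\cap\da x\leqv S_k\}$, and this down-set condition is what drives all three verifications: closure of $G$ under arbitrary nonempty meets and joins (this is where the frame and locale laws enter, as you anticipated), the lower-bound property $G\leqv S_k$ for every $k$, and maximality, which is handled by replacing an arbitrary lower bound $Z$ with $Z^\bot\ud\{z\vee\bot~|~z\in Z\}$, where $\bot\ud\bigwedge_{i}\bigwedge S_i$, and showing $Z\leqv Z^\bot$, $Z^\bot\subseteq G$ and $Z^\bot\leqv G$, whence $Z\leqv G$ by transitivity of $\leqv$. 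Without an analogue of this selection condition your plan stalls at the very first substantive step~---~the lower-bound property already fails for the naive closure~---~so what you have is a proof strategy with its central construction left blank, not a proof.
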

\begin{proof}
Let $\{A_i\}_{i\in I} \subseteq \SL(C)$, for some family of indices $I\neq \varnothing$. Let us define
$$G \ud \{x\in \cMs(\cup_{i\in I} A_i) ~|~ \forall k\in I.\; \cMs(\cup_{i\in I} A_i) \:\cap \da x \leqv A_k\}.$$
The following three points show that $G$ is the glb of $\{A_i\}_{i\in I}$ in $\tuple{\SL(C),\leqv}$. 

\medskip
\noindent
$(1)$ We show that $G\in \SL(C)$. Let $\bot \ud \bigwedge_{i\in I} \bigwedge A_i$. First, 
$G$ is nonempty because it turns out that $\bot \in G$. 
Since, for any $i\in I$, $\bigwedge A_i \in A_i$ and $I\neq \varnothing$, 
we have that $\bot \in \cMs(\cup_i A_i)$. Let $y\in \cMs(\cup_i A_i) \:\cap \da 
\bot$ and,
for some $k\in I$,  
$a\in A_k$. On the one hand, we have that $y\wedge a \in \cMs(\cup_i A_i) \:\cap \da \bot$ trivially holds. 
On the other hand, since $y\leq \bot \leq a$, we have that $y\vee a = a \in A_k$. 

Let us now consider a set $\{x_j\}_{j\in J} \subseteq G$, for some family of indices $J\neq \varnothing$, so that, for any $j\in J$
and $k\in I$, $\cMs(\cup_i A_i) \:\cap \da x_j \leqv A_k$. 

First, notice that $\bigwedge_{j\in J} x_j 
\in \cMs(\cup_i A_i)$ holds. Then, since $\da  (\bigwedge_{j\in J} x_j) = \bigcap_{j\in J} \da x_j$ holds, 
we have that $\cMs(\cup_i A_i) \:\cap \da  (\bigwedge_{j\in J} x_j) = \cMs(\cup_i A_i) \,\cap (\bigcap_{j\in J}\! \da x_j)$, 
so that, for any $k \in I$, $\cMs(\cup_i A_i) \:\cap \da  (\bigwedge_{j\in J} x_j) \leqv A_k$, that is, 
$\bigwedge_{j\in J} x_j \in G$. 

Let us now prove that $\bigvee_{j\in J} x_j 
\in \cMs(\cup_i A_i)$ holds. First, since any $x_j\in \cMs(\cup_{i\in I} A_i)$, we have that $x_j = \bigwedge_{i\in K(j)} a_{j,i}$,
where, for any $j\in J$, $K(j)\subseteq I$ is a nonempty family of indices in $I$ 
such that for any $i\in K(j)$, $a_{j,i}\in A_i$. 
For any $i\in I$, we then define the family of indices $L(i)\subseteq J$ as follows: 
$L(i) \ud \{ j\in J~|~i\in K(j)\}$. Observe that it may happen that $L(i) = \varnothing$. 
Since for any $i\in I$ such that $L(i)\neq \varnothing$, 
$\{ a_{j,i}\}_{j\in L(i)}\subseteq A_i$ and $A_i$ is meet-closed,
we have that if $L(i)\neq \varnothing$ then $\hat{a}_i \ud \bigwedge_{l\in L(i)} a_{l,i}\in A_i$. 
Since, given $k\in I$ such that $L(k)\neq \varnothing$, for any $j\in J$, $\cMs(\cup_{i\in I} A_i) \:\cap \da x_j \leqv A_k$, we have that
for any $j\in J$, $x_j \vee \hat{a}_k \in A_k$. Since $A_k$ is join-closed, we obtain that $\bigvee_{j\in J} (x_j \vee \hat{a}_k) 
= (\bigvee_{j\in J} x_j) \vee \hat{a}_k \in A_k$. Consequently, 
$$
\bigwedge_{k\in I,\atop{L(k) \neq \varnothing}} \big( (\bigvee_{j\in J} x_j) \vee \hat{a}_k \big)\in \cMs(\cup_{i\in I} A_i).
$$
Since $C$ is a complete co-Heyting algebra,  
$$
\bigwedge_{k\in I,\atop{L(k) \neq \varnothing}} \big( (\bigvee_{j\in J} x_j) \vee \hat{a}_k \big)
= (\bigvee_{j\in J} x_j) \vee (\bigwedge_{k\in I,\atop{L(k) \neq \varnothing}} \hat{a}_k).
$$
Thus, since, for any $j\in J$, 
$$
\displaystyle \bigwedge_{k\in I,\atop{L(k) \neq \varnothing}} \hat{a}_k = \bigwedge_{j\in J} {\textstyle
\bigwedge_{i\in K(j)}} a_{j,i}\leq  x_j,
$$
we obtain that $ (\bigvee_{j\in J} x_j) \vee \displaystyle(\bigwedge_{k\in I,\atop{L(k) \neq \varnothing}} \hat{a}_k) =$ $\bigvee_{j\in J} x_j$,
so that $\bigvee_{j\in J} x_j \in \cMs(\cup_{i\in I} A_i)$. 

Finally, in order to prove that $\bigvee_{j\in J} x_j \in G$, let us
show that for any $k \in I$, $\cMs(\cup_i A_i) \:\cap \da  (\bigvee_{j\in J} x_j) \leqv A_k$. 
Let $y\in \cMs(\cup_i A_i) \:\cap \da  (\bigvee_{j\in J} x_j)$ and $a\in A_k$. For any $j\in J$, 
$y\wedge x_j \in \cMs(\cup_i A_i) \:\cap \da  (\bigvee_{j\in J} x_j)$, so that $(y\wedge x_j)\vee a\in A_k$. 
Since $A_k$ is join-closed, we obtain that $\bigvee_{j\in J} \big((y\wedge x_j)\vee a\big) =a \vee \big(\bigvee_{j\in J} (y\wedge x_j)\big) \in A_k$. Since $C$ is a complete Heyting algebra, 
$a \vee \big(\bigvee_{j\in J} (y\wedge x_j)\big) = a \vee \big(y \wedge  
(\bigvee_{j\in J} x_j) \big)$. Since $y \wedge  
(\bigvee_{j\in J} x_j) = y$, we derive that $y \vee a \in A_k$. On the other hand, $y\wedge a\in \cMs(\cup_i A_i) \:\cap \da  (\bigvee_{j\in J} x_j)$
trivially holds. 

\medskip
\noindent
$(2)$ We show that for any $k\in I$, $G\leqv A_k$. Let $x\in G$ and $a\in A_k$. Hence, 
$x\in \cMs(\cup_i A_i)$ and for any $j\in I$, $\cMs(\cup_i A_i) \:\cap \da x \leqv A_j$.
We first prove that $\cMs(\cup_i A_i) \:\cap \da x \subseteq G$. 
Let $y \in \cMs(\cup_i A_i) \:\cap \da x$, and let us check that for any $j\in I$, 
$\cMs(\cup_i A_i) \:\cap \da y \leqv A_j$: if $z\in \cMs(\cup_i A_i) \:\cap \da y $ and $u\in A_j$ then 
$z\in \cMs(\cup_i A_i) \:\cap \da x$ so that $z\vee u\in A_j$ follows, while $z\wedge u \in \cMs(\cup_i A_i) \:\cap \da y$ trivially holds. 
 Now, since $x\wedge a \in \cMs(\cup_i A_i) \:\cap \da x$, we have that $x\wedge a \in G$. On the other hand, 
since $x\in \cMs(\cup_i A_i) \:\cap \da x \leqv A_k$, we also have that $x\vee a \in A_k$.

\medskip
\noindent
$(3)$ We show that if $Z\in \SL(C)$ and, for any $i\in I$, $Z\leqv A_i$ then $Z\leqv G$.  
By point~(1), $\bot = \bigwedge_{i\in I} \bigwedge A_i \in G$.
We then define $Z^\bot \subseteq C$ as follows: $Z^\bot \ud \{ x \vee \bot ~|~ x\in Z\}$. 
It turns out that $Z^\bot \subseteq \cMs(\cup_i A_i)$: in fact, since $C$ is a complete co-Heyting algebra, 
for any $x\in Z$, we have that $x\vee (\bigwedge_{i\in I} \bigwedge A_i) = \bigwedge_{i\in I} (x\vee \bigwedge A_i)$, 
and since $x\in Z$, for any $i\in I$, $\bigwedge A_i \in A_i$, and $Z\leqv A_i$, we have that $x \vee \bigwedge A_i \in A_i$, 
so that  $\bigwedge_{i\in I} (x\vee \bigwedge A_i) \in \cMs(\cup_i A_i)$. 
Also, it turns out that 
$Z^\bot \in \SL(C)$. If $Y\subseteq Z^\bot$ and $Y \neq \varnothing$ then $Y =\{x\vee \bot\}_{x\in X}$ for
some $X\subseteq Z$ with $X\neq \varnothing$. Hence, $\bigvee Y = \bigvee_{x\in X} (x\vee \bot) = (\bigvee X) \vee \bot$,
and since $\bigvee X\in Z$, we therefore have that  $\bigvee Y\in Z^\bot$. On the other hand, $\bigwedge Y = \bigwedge_{x\in X} (x\vee \bot)$,
and, as $C$ is a complete co-Heyting algebra, $\bigwedge_{x\in X} (x\vee \bot) = (\bigwedge X) \vee \bot$, and since 
$\bigwedge X\in Z$, we therefore obtain that $\bigwedge Y \in Z^\bot$. 
We also observe that $Z \leqv Z^\bot$. In fact, if  $x\in Z$ and $y \vee \bot\in Z^\bot$, for some $y\in Z$,  
then, clearly, $x \vee y \vee \bot \in Z^\bot$, while, by distributivity of $C$, $x \wedge (y \vee \bot)= 
(x\wedge y) \vee \bot \in Z^\bot$. 
Next, we show that for any $i\in I$, $Z^\bot \leqv A_i$. Let $x\vee \bot \in Z^\bot$, for some $z\in Z^\bot$, 
and $a\in A_i$. Then, by distributivity of $C$, $(x\vee \bot) \wedge a = (x \wedge a) \vee (\bot \wedge a) = 
(x\wedge a) \vee \bot$, and since, by $Z\leqv A_i$, we know that $x\wedge a \in Z$, we also 
have that $(x\wedge a) \vee \bot\in Z^\bot$. On the other hand, $(x\vee \bot) \vee a = (x \vee a) \vee \bot$, 
and since, by $Z\leqv A_i$, we know that $\bot \leq x\vee a \in A_i$, we obtain 
that $(x\vee a) \vee \bot = x\vee a \in A_i$.

Summing up, we have therefore shown that for any $Z\in \SL(C)$ such that, for any $i\in I$,  $Z\leqv A_i$, 
there exists $Z^\bot \in \SL(C)$ such that $Z^\bot \subseteq \cMs(\cup_i A_i)$ and, for any $i\in I$,
$Z^\bot \leqv A_i$. We now prove that $Z^\bot \subseteq G$. Consider $w\in Z^\bot$, and let us
check that for any $i\in I$, $\cMs(\cup_i A_i)\: \cap {\da w} \leqv A_i$. Hence, consider $y\in \cMs(\cup_i A_i)\:\cap \da w$ and
$a\in A_i$. Then, $y\wedge a\in \cMs(\cup_i A_i)\:\cap \da w$ follows trivially. Moreover, since 
$y\in \cMs(\cup_i A_i)$, there exists a subset $K\subseteq I$, with $K\neq \varnothing$, such that for any 
$k\in K$ there exists $a_k\in A_k$ such that
$y=\bigwedge_{k\in K} a_k$. Thus, since, for any $k\in K$, $z\wedge a_k\in \cMs(\cup_i A_i)\:\cap \da z \leqv A_i$,
we obtain that $\{(z\wedge a_k) \vee a\}_{k\in K} \subseteq A_i$. Since $A_i$ is meet-closed, 
$\bigwedge_{k\in K} \big( (w\wedge a_k) \vee a\big) \in A_i$. 
Since $C$ is a complete co-Heyting algebra, $\bigwedge_{k\in K} \big( (w\wedge a_k) \vee a\big) = 
a \vee \big(\bigwedge_{k\in K} (w\wedge a_k)\big) = a \vee \big( w \wedge (\bigwedge_{k\in K} a_k)\big) = 
a\vee (w\wedge y) = a \vee y$, so that $a\vee y \in A_i$ follows. 

To close the proof of point~(3), we show that $Z^\bot \leqv G$. Let $z \in  Z^\bot$ and $x\in G$. 
On the one hand, since $Z^\bot \subseteq G$, we have that $z\in G$, and, in turn, as $G$ is join-closed, we obtain 
that $z\vee x \in G$. On the other hand, since $x\in \cMs(\cup_i A_i)$, 
there exists a subset $K\subseteq I$, with $K\neq \varnothing$, such that for any 
$k\in K$ there exists $a_k\in A_k$ such that
$x=\bigwedge_{k\in K} a_k$. Thus,  since $Z^\bot \leqv A_k$, for any $k\in K$, we obtain that $z\wedge a_k\in Z^\bot$. 
Hence, since $Z^\bot$ is meet-closed, we have that 
$\bigwedge_{k\in K} (z\wedge a_k) = z \wedge \big(\bigwedge_{k\in K} a_k\big) = z\wedge x
\in Z^\bot$.

\medskip
\noindent
To conclude the proof, we notice that
$\{\top_C\} \in \SL(C)$ is the greatest element in $\tuple{\SL(C),\leqv}$. Thus, since $\tuple{\SL(C),\leqv}$
has nonempty glb's and the greatest element, it 
turns out that it is a complete lattice.
\end{proof}

We have thus shown the following characterization of complete Heyting and co-Heyting algebras.  

\begin{corollary}
Let $C$ be a complete lattice. Then, $\tuple{\SL(C),\leqv}$ is a complete lattice if and only if $C$ is 
a complete Heyting and co-Heyting algebra. 
\end{corollary}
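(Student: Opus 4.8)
The plan is to derive the corollary directly from the two implications already proved, since the biconditional is exactly their conjunction. I would split the statement into its two directions and close each by invoking the appropriate theorem.

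For the direction asserting that $C$ must be a complete Heyting and co-Heyting algebra, I would assume that $\tuple{\SL(C),\leqv}$ is a complete lattice and apply Theorem~\ref{main-th} verbatim. For the converse direction, I would assume that $C$ is a complete Heyting and co-Heyting algebra and invoke the theorem of the preceding section, which exhibits the greatest lower bound $G$ of an arbitrary family $\{A_i\}_{i\in I}\subseteq\SL(C)$ and identifies the top element $\{\top_C\}$, thereby establishing that $\tuple{\SL(C),\leqv}$ is a complete lattice.

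There is no genuine obstacle at the level of the corollary itself: its entire content is the packaging of the two theorems into a single equivalence, so I would state its proof in essentially one sentence recording that the two implications together yield the stated biconditional. The real substance lies upstream, in the nontrivial construction within the proof of Theorem~\ref{main-th}~---~where one builds the chain-indexed family $\{X_\beta\}_{\beta<\alpha}$, verifies $A=\cMs(\bigcup_{\beta<\alpha}X_\beta)\in\SL(C)$, and shows that no glb can exist by comparing the lower bounds $A$ and $B=\cMs(A\cup\{b\})$~---~and, for the converse, in the explicit verification that $G$ is a lower bound and the greatest one, which crucially uses both the Heyting law $a\vee(y\wedge\bigvee_j x_j)=\bigvee_j(a\vee(y\wedge x_j))$ and its co-Heyting dual $\bigwedge_k(w\vee\hat a_k)=w\vee\bigwedge_k\hat a_k$. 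Consequently, once those two theorems are in hand, the corollary requires no additional calculation.
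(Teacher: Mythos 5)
Your proposal is correct and matches the paper exactly: the corollary is stated there without any separate argument, being precisely the conjunction of Theorem~\ref{main-th} and the theorem of the following section. Your one-sentence packaging of the two implications is all that is needed.
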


To conclude, we provide an example showing that the property of
being a complete lattice for the poset $\tuple{\SL(C),\leqv}$  cannot be a 
characterization for a complete Heyting (or co-Heyting) algebra $C$.

\begin{example}\rm 
Consider the complete lattice $C$ depicted on the left.  
\begin{center}
    \begin{tikzpicture}[scale=0.75]
      \draw (-2,6.5) node[name=top] {{$C$}};

      \draw (0,7) node[name=top] {{$\top$}};
      \draw (-1,6) node[name=a0] {{$a_0$}};
      \draw (1,6) node[name=b] {{$b$}};
      \draw (-1,5) node[name=a1] {{$a_1$}};
      \draw (1,5) node[name=b0] {{$b_0$}};
      \draw (-1,4) node[name=a2] {{$a_2$}};
      \draw (1,4) node[name=b1] {{$b_1$}};
      \draw (0,1) node[name=bot] {{$\bot$}};
      \draw (-1,3.6) node[] {{$\vdots$}};
      \draw (1,3.6) node[] {{$\vdots$}};
      \draw (-0.6,1.7) node[] {{$\ddots$}};
      \draw (0.6,1.7) node[] {{$\iddots$}};

      \draw[semithick] (top) -- (a0);
      \draw[semithick] (top) -- (b);
      \draw[semithick] (b) -- (b0);
      \draw[semithick] (a0) -- (a1);
      \draw[semithick] (a0) -- (b0);
      \draw[semithick] (b0) -- (b1);
      \draw[semithick] (a1) -- (a2);
      \draw[semithick] (a1) -- (b1);

\end{tikzpicture}
\qquad\qquad\qquad\qquad
 \begin{tikzpicture}[scale=0.75]
      \draw (2,6.5) node[name=top] {{$D$}};

      \draw (0,7) node[name=top] {{$\top$}};
      \draw (-1,6) node[name=a0] {{$a_0$}};
      \draw (1,6) node[name=b] {{$b$}};
      \draw (-1,5) node[name=a1] {{$a_1$}};
      \draw (1,5) node[name=b0] {{$b_0$}};
      \draw (-1,4) node[name=a2] {{$a_2$}};
      \draw (1,4) node[name=b1] {{$b_1$}};
      \draw (-1,2) node[name=ao] {{$a_\omega$}};
  
      \draw (0,1) node[name=bot] {{$\bot$}};
      \draw (-1,3.6) node[] {{$\vdots$}};
      \draw (1,3.6) node[] {{$\vdots$}};
      \draw (0.6,1.7) node[] {{$\iddots$}};

      \draw[semithick] (top) -- (a0);
      \draw[semithick] (top) -- (b);
      \draw[semithick] (b) -- (b0);
      \draw[semithick] (a0) -- (a1);
      \draw[semithick] (a0) -- (b0);
      \draw[semithick] (b0) -- (b1);
      \draw[semithick] (a1) -- (a2);
      \draw[semithick] (a1) -- (b1);
      \draw[semithick] (bot) -- (ao);

\end{tikzpicture}

\end{center}

\noindent
$C$ is distributive but not a complete co-Heyting algebra: $b \vee \big(\bigwedge_{i\geq 0} a_i\big) = b < \bigwedge_{i\geq 0} (b\vee a_i) = \top$. 
Let $X_0 \ud \{\top, a_0\}$ and, for any $i\geq 0$, $X_{i+1} \ud X_i \cup \{a_{i+1}\}$, so that $\{X_i\}_{i\geq 0} \subseteq \SL(C)$. 
Then, it turns out that the glb of $\{X_i\}_{i\geq 0}$ in $\tuple{\SL(C),\leqv}$ does not exist. This can be shown by mimicking the proof of Theorem~\ref{main-th}. 
Let $A\ud\{\bot \} \cup \bigcup_{i\geq 0} X_i\in \SL(C)$. Let us observe that $A$ is a
lower bound of $\{X_i\}_{i\geq 0}$. Hence, if we suppose that $Y\in \SL(C)$ is the glb of $\{X_i\}_{i\geq 0}$ 
then $A\leqv Y$ must hold. Hence, if $y\in Y$ then 
$\top \wedge y = y \in A$, so that $Y\subseteq A$, and $\top \vee y \in Y$. Since, $Y\leqv X_0$, we have that $\top \vee y \vee \top =\top \vee y 
\in X_0=\{\top, a_0\}$, so that necessarily $\top \vee y = \top \in Y$. Hence, from $Y\leqv X_i$, for any $i\geq 0$, we obtain that
$\top \wedge a_i = a_i \in Y$. Hence, $Y=A$. The whole complete lattice $C$ is also 
a lower bound of $\{X_i\}_{i\geq 0}$, therefore $C \leqv Y=A$ must hold: however, 
this is a contradiction because from $b\in C$ and $\bot\in A$ we obtain that 
$b\vee \bot = b\in A$.  

\noindent
It is worth noting that if we instead consider the complete lattice $D$ depicted on the right of the above figure, 
which includes a new glb 
$a_\omega$ of the chain $\{a_i\}_{i\geq 0}$, then $D$ becomes a complete Heyting and co-Heyting algebra, and 
in this case the glb of $\{X_i\}_{i\geq 0}$ in $\tuple{\SL(D),\leqv}$ turns out to be $\{\top\} \cup \{a_i\}_{i\geq 0} \cup \{a_\omega\}$.
\qed
\end{example}

\section*{Acknowledgements}
The author has been partially supported by 
the University of Padova under the 2014 PRAT project ``ANCORE''.


\begin{thebibliography}{Balbes and Dwinger 1974}
\bibliographystyle{plain}

\bibitem[Balbes and Dwinger 1974]{bd}
R.~Balbes and P.~Dwinger.
\newblock {\em Distributive Lattices}.
\newblock University of Missouri Press, Columbia, Missouri, 1974.


\bibitem[Bruns 1967]{bruns67}
G.~Bruns. 
\newblock A lemma on directed sets and chains. 
\newblock \emph{Archiv der Mathematik}, 18(6):561-563, 1967. 

\bibitem[Chang and Horn 1962]{ch62}
C.C.~Chang and A.~Horn. 
\newblock On the representation of $\alpha$-complete lattices. 
\newblock \emph{Fund.\ Math.}, 51:254-258, 1962. 

\bibitem[Funayama 1959]{fun59}
N.~Funayama.
\newblock Imbedding infinitely distributive lattices completely isomorphically 
into {B}oolean algebras.
\newblock \emph{Nagoya Math.\ J.}, 15:71-81, 1959.

\bibitem[Gierz \emph{et al.} 1980]{gierz}
G.~Gierz, K.~H. Hofmann, K.~Keimel, J.~D. Lawson, M.~Mislove, and D.~S. Scott.
\newblock {\em A Compendium of Continuous Lattices}.
\newblock Springer, Berlin, 1980.

\bibitem[Johnstone 1982]{johnstone1982}
P.T.~Johnstone.
\newblock \emph{Stone Spaces}.
\newblock Cambridge University Press, 1982. 


\bibitem[Johnstone 1983]{johnstone1983}
P.T.~Johnstone.
\newblock The point of pointless topology.
\newblock \emph{Bull.\ Amer.\ Math.\ Soc.}, 8(1):41-53, 1983. 

\bibitem[Koppelberg 1989]{hand}
S.~Koppelberg. 
\newblock \emph{Handbook of Boolean algebras, vol.~1}. 
\newblock Edited by J.D.~Monk and R.~Bonnet, North-Holland, 1989. 

\bibitem[Ranzato 2016]{ran16}
F.~Ranzato. 
\newblock Abstract interpretation of supermodular games. 
\newblock In X.~Rival editor, \emph{Proceedings of the 23rd 
International Static Analysis Symposium (SAS'16)}, 
Edinburgh, UK, LNCS vol.~9837, pages 403-423, Springer, 2016. 

\bibitem[Topkis 1978]{topkis78}
D.M.~Topkis.
\newblock
Minimizing a submodular function on a lattice.
\newblock \emph{Operations Research}, 26(2):305--321, 1978.

\bibitem[Topkis 1998]{topkis98}
D.M.~Topkis. 
\emph{Supermodularity and Complementarity}. 
\newblock Princeton University
Press, 1998. 

\bibitem[Veinott 1989]{vei89}
A.F.~Veinott.
\newblock \emph{Lattice programming}.
\newblock Unpublished notes from lectures at
Johns Hopkins University, 1989.

\bibitem[Vladimirov 2002]{vla02}
D.A.~Vladimirov.
\newblock \emph{Boolean Algebras in Analysis}.
\newblock Springer Netherlands, 2002. 


\bibitem[Zhou 1994]{zhou94}
L.~Zhou.
\newblock The set of Nash equilibria of a supermodular game is a 
complete lattice.
\newblock \emph{Games and Economic Behavior}, 7(2):295-300, 1994. 


\end{thebibliography}
\end{document}